\definecolor{rank1}{HTML}{00ff00}
\definecolor{rank2}{HTML}{7dff00}
\definecolor{rank3}{HTML}{b1ff00}
\definecolor{rank4}{HTML}{dbff00}
\definecolor{rank5}{HTML}{ffff00}
\definecolor{rank6}{HTML}{ffd000}
\definecolor{rank7}{HTML}{ff9f00}
\definecolor{rank8}{HTML}{ff6700}
\definecolor{rank9}{HTML}{ff0000}
\definecolor{rank06}{HTML}{ff0000}
\definecolor{rank05}{HTML}{ff7b00}
\definecolor{rank04}{HTML}{ffc000}
\definecolor{rank03}{HTML}{ffff00}
\definecolor{rank02}{HTML}{b1ff00}
\definecolor{rank01}{HTML}{00ff00}
\tikzset{
    -Latex,auto,node distance =1 cm and 1 cm,semithick,
    state/.style ={ellipse, draw, minimum width = 0.7 cm},
    point/.style = {circle, draw, inner sep=0.04cm,fill,node contents={}},
    bidirected/.style={Latex-Latex,dashed},
    el/.style = {inner sep=2pt, align=left, sloped}
}
\theoremstyle{plain}% default
\newtheorem{theorem}{Theorem}
\newtheorem{prop}[theorem]{Proposition}
\theoremstyle{definition}
\newtheorem{example}{Example}
\theoremstyle{remark}
\newtheorem{remark}[theorem]{Remark}
\newcommand{\R}{\mathbb{R}}
\newcommand{\E}{\mathbb{E}}
\newcommand{\var}{\mathrm{Var}}
\newcommand{\iid}{\stackrel{\mathrm{i.i.d.}}{\sim}}
\newcommand{\eqdist}{\stackrel{d}{=}}
\newcommand{\ind}{\mathbbm{1}}
\newcommand{\Cov}{\mathrm{Cov}}
\newcommand{\Corr}{\mathrm{Corr}}
\newcommand{\Var}{\mathrm{Var}}
\newcommand{\tr}{\mathrm{tr}}
\newcommand{\diag}{\mathrm{diag}}
\newcommand{\argmin}{\mathrm{argmin}}
\newcommand{\GEE}{\mathrm{GEE}}
\newcommand{\QML}{\mathrm{EQML}}
\newcommand{\SL}{\mathrm{SL}}
\newcommand{\GLM}{\mathcal{P}_\mathrm{GLM}}
\newcommand{\QGLM}{\mathcal{P}_\mathrm{QGLM}}
\newcommand{\varmodel}{\mathcal{P}_v}
\newcommand{\meanmodel}{\mathcal{P}_{\mathrm{\mu}}}
\newcommand{\estimators}{\mathcal{E}_{\textrm{QML}}}
\newcommand{\Hom}{\mathcal{P}_\mathrm{Hom}}
\newcommand{\Het}{\mathcal{P}_\mathrm{Het}}
\newcommand{\VVar}{\mathcal{P}_\mathrm{Var}}
\newcommand{\vertiii}[1]{{\left\vert\kern-0.25ex\left\vert\kern-0.25ex\left\vert #1 
    \right\vert\kern-0.25ex\right\vert\kern-0.25ex\right\vert}}
\newcommand\independent{\protect\mathpalette{\protect\independenT}{\perp}}
    \def\independenT#1#2{\mathrel{\rlap{$#1#2$}\mkern2mu{#1#2}}}
\newcommand\given{\,|\,}
\newcommand{\RN}[1]{%%%%
  \textup{\uppercase\expandafter{\romannumeral#1}}%%%%
}%%%
\newcommand{\mylabel}[2]{#2\def\@currentlabel{#2}\label{#1}}
\title{Sandwich regression for accurate and robust estimation in generalized linear multilevel and longitudinal models}
\date{December 2024}
\author[1]{Elliot H.\ Young
\thanks{ey244@cam.ac.uk}}
\author[1]{Rajen D.\ Shah
\thanks{r.shah@statslab.cam.ac.uk}}
\affil{Statistical Laboratory, University of Cambridge, UK}
\begin{document}

\maketitle

\begin{abstract}
Generalized linear models are a popular tool in applied statistics, with their maximum likelihood estimators enjoying asymptotic Gaussianity and efficiency. 
As all models are wrong, it is desirable to understand these estimators' behaviours under model misspecification. We study semiparametric multilevel generalized linear models, where only the conditional mean of the response is taken to follow a specific parametric form. Pre-existing estimators from mixed effects models and generalized estimating equations require specificaiton of a conditional covariance, which when misspecified can result in inefficient estimates of fixed effects parameters. 
It is nevertheless often computationally attractive to consider a restricted, finite dimensional class of estimators, as these models naturally imply. 
%We introduce sandwich regression, a new method for estimating dispersion parameters within a restricted class of estimators, selecting the optimal estimator over the full semiparametric model. 
We introduce sandwich regression, that selects the estimator of minimal variance within a parametric class of estimators over all distributions in the full semiparametric model.
% We introduce sandwich regression, that selects from a parametric class of estimators that of minimal variance over all distributions in the full semiparametric model. 
%This estimation strategy relies on an objective function - the sandwich loss - whose minimiser coincides with the dispersion parameters that result in minimal variance of fixed effects estimates even when the `working covariance' model introduced is misspecified. A computationally tractable finite sample empirical Jackknife estimator for the sandwich loss is introduced for this purpose. 
We demonstrate numerically on simulated and real data the attractive improvements our sandwich regression approach enjoys over classical mixed effects models and generalized estimating equations.
\end{abstract}

\section{Introduction}
All models are wrong, but some are useful. When a large quantity of high-quality data is available, it is natural to be minimal with regards to the model assumptions being imposed when performing statistical analyses, with assumption-lean estimators typically requiring data-driven estimation of nonparametric nuisance functions. 
When dealing with smaller datasets, these nonparametric estimations become impractical, and instead often lead one to impose certain structural assumptions on the data generating process. 
The generalized linear model (glm)~\citep{glms} has proved a popular structure to impose, 
where the distribution of an outcome $Y\in\R$ conditional on covariates $X\in\R^{1\times p}$ lies in a finite dimensional parametric class. Together with a random design on the covariates, this defines a class of distributions $\GLM$ on the pair $(Y,X)$. 
The maximum likelihood estimator corresponding to $\GLM$ enjoys consistency, asymptotic Gaussianity and efficiency within this model. 
This estimator remains efficient within a broader class of models~\citep{qml} provided the first two conditional moments implied by the glm hold i.e.~the model 
\begin{equation*}
    \QGLM := \Big\{P: 
    g\big(\E_P[Y\given X]\big) = X\beta
    ,\;
    \Var_P(Y\given X) = v(g^{-1}(X\beta))
    \Big\},
\end{equation*}
where $\beta\in\R^p$ is a fixed effect parameter of interest, $g$ is a strictly increasing link function, and $v$ is the variance function of the glm.   
This model is termed the quasi generalized linear model~\citep{qmle}, with corresponding estimator the quasi maximum likelihood (QML) estimator, and extends immediately to allow the variance function $v$ to be an arbitrary function of the mean. The model $\QGLM$ therefore provides extra flexibility and robustness over $\GLM$, but arguably is still restrictive in terms of the form of the conditional variance $\Var_P(Y\given X)$ being imposed. 
This restriction can be alleviated~\citep{glmbook, eqml} to allow the variance function to take an arbitrary function of the covariates
\begin{equation*}
    \varmodel := \Big\{P:
    g\big(\E_P[Y\given X]\big) = X\beta
    ,\;
    \Var_P(Y\given X) = \nu_\gamma(X)
    ,\;
    \gamma\in\Gamma
    \Big\},
\end{equation*}
for some class of variance functions $\{\nu_\gamma:\gamma\in\Gamma\}$ where $\Gamma\subseteq\R^q$ for some $q\in\mathbb{N}$. This setting is sometimes termed pseudo maximum likelihood estimation~\citep{gourieroux, ziegler}. 

Given i.i.d.~realisations $(Y_i,X_i)\in\R\times\R^p$ from a distribution (not necessarily in $\varmodel$), and for each $\gamma\in\Gamma$ defining a law in $\varmodel$, a QML estimator $\tilde{\beta}(\gamma)$ for $\beta$ can be constructed, thus implying a class $\estimators$ of estimators parametrised by the dispersion parameter $\gamma\in\Gamma$;
\begin{multline*}
    \estimators = 
    \bigg\{ 
    \tilde{\beta}(\gamma)\in\R^p :\;
    \sum_i\psi_{\text{QML}}(Y_i,X_i;\tilde{\beta}(\gamma))=0
    ,\\
    \psi_{\text{QML}}(Y_i,X_i,\beta) = \frac{(Y_i-g^{-1}(X_i\beta))}{g'(g^{-1}(X_i\beta))\,\nu_\gamma(X_i)}X_i^T,
    \;
    \gamma\in\Gamma
    \bigg\}
    ,
\end{multline*}
When the `true distribution' lies within the class $\varmodel$, then provided the `true $\gamma$' - that is the $\gamma\in\Gamma$ that correctly specifies the variance function - can be estimated consistently, the resulting QML estimator for $\beta$ is also efficient. In the semiparametric literature this property is termed \emph{local efficiency}~\citep{vanderlaan-robins, tsiatis, zhang, targeted-learning}.
In practice two prominent methods are commonly employed for dispersion estimation. 
The first is extended quasi maximum likelihood estimation (EQML)~\citep{eqml, glmbook}, that fits the pseudo-likelihood obtained if $Y\given X$ were normally distributed with variance $\nu_{\gamma}(X)$. 
Note that in practice the terms EQML and QML are often used synonymously; for avoidance of doubt here we use EQML to refer to dispersion parameter estimation and QML to refer to specifically $\beta$ estimation given a fixed dispersion $\gamma\in\Gamma$. 
The second strategy is common in the literature for generalized estimating equations (GEE), which aims to estimate the dispersion parameter via a least squares regression of the squared Pearson residuals onto the hypothesised (scaled) variance model. 
When the true distribution lies in $\varmodel$ both these strategies consistently estimate the optimal dispersion, resulting in locally efficient estimation. 
However, local efficiency says nothing with regards to the variance of the estimator when the true distribution lies outside this model $\varmodel$. We will study the broader model that only assumes the first conditional moment implied by the glm
\begin{equation*}
    \meanmodel := \Big\{P: g\big(\E_P[Y\given X]\big) = X\beta \Big\}.
\end{equation*}
Within $\meanmodel$ every estimator in $\estimators$ remains consistent and asymptotically Gaussian, albeit with differing variances. We will see that locally efficient estimators with respect to the class $\varmodel$ need not even be the estimator in $\estimators$ of minimal variance when the true distribution lies within $\meanmodel\backslash\varmodel$ i.e.~when the conditional variance model is misspecified, and in fact this suboptimality can be arbitrarily large (see Proposition~\ref{prop:divratio}). 
We begin by exploring a simple motivating example of the above in the clustered linear model (see Section~\ref{sec:motivating-eg}). First we introduce a few details specific to the linear model.

\begin{figure}[ht]
    \centering
    \includegraphics[width=0.9\textwidth]{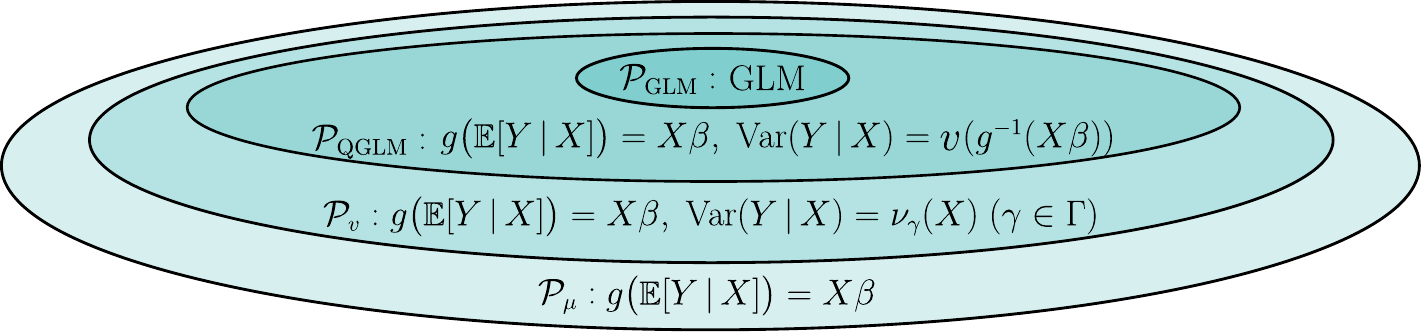}
    \caption{Schematic to describe the hierarchy of models $\GLM\subset\QGLM\subset\varmodel\subset\meanmodel$ we will consider. In our setup we consider a class of estimators $\estimators$ given by the quasi maximum likelihood $\beta$ estimators obtained as laws span over $\varmodel$ (i.e.~variance functions spanning some potentially misspecified parametric class $\{\nu_\gamma:\gamma\in\Gamma\}$). We will be selecting from this class of estimators that of minimal variance under only the assumption that the true law lies in $\meanmodel$.}
    \label{fig:hierarchy}
\end{figure}

\subsection{Grouped data}\label{sec:grouped-data}

The clustered quasi-Gaussian linear model is an example of a quasi generalized linear model for grouped data, asserting each cluster, consisting of a multivariate $Y$ and matrix of covariates $X$, satisfies
\begin{equation}\label{eq:quasi-LM}
    \E_P[Y\given X]=X\beta,
    \qquad
    \Cov_P(Y\given X) = \Sigma_\gamma(X)  \;(\gamma\in\Gamma),
    \end{equation}
    for some variance model defined in terms of a dispersion parameter $\gamma\in\Gamma$. 
    Given i.i.d.~data $(Y_i,X_i)\eqdist (Y,X)$ the class of QML estimators in the quasi-Gaussian linear model takes the form of the weighted least squares estimators
    \begin{equation}\label{eq:lin-estimators}
        \estimators = \bigg\{\tilde{\beta}(\gamma)=\Big(\sum_i X_i^T\Sigma_\gamma(X_i)^{-1}X_i\Big)^{-1}\Big(\sum_i X_i^T\Sigma_\gamma(X_i)^{-1}Y_i\Big)
        :\,\gamma\in\Gamma
        \bigg\}.
    \end{equation}
    A popular dispersion structure imposed for clustered data is the linear mixed effects model, which asserts
    \begin{equation*}
    Y\given X  \sim  N\big( X\beta \,,\; X^{(\text{sub})}\mathcal{V}X^{(\text{sub})T} + \sigma^2 I \big),
\end{equation*}
for some subcollection $X^{\text{(sub)}}$ of the full covariates $X$, positive definite matrix $\mathcal{V}$ and $\sigma>0$, which motivates the class of weighted least squares estimators~\eqref{eq:lin-estimators} with $$\Sigma_\gamma(X)=X^{\text{(sub)}}\mathcal{V}X^{\text{(sub)}T}+\sigma^2 I, \quad \gamma=(\mathcal{V},\sigma).$$ 
The quadratic conditional covariance introduced here is motivated through the imposition of fictitious `random effects', whose aim is to motivate an estimator of reduced variance to the inefficient OLS estimator. 
For longitudinal datasets the ARMA model is often used to imply a practical dispersion structure. 
Given a working model $\Sigma_\gamma$, to estimate the dispersion parameter $\gamma$ the EQML and GEE losses for the grouped linear model~\eqref{eq:quasi-LM} correspond to 
\begin{align}
    \gamma_{P,\QML} &:= \underset{\gamma\in\Gamma}{\mathrm{argmin}}\,\E_P\Big[\log\Sigma_\gamma(X)+\varepsilon^T\Sigma_\gamma(X)^{-1}\varepsilon\Big] \notag
    \\
    &\,= \underset{\gamma\in\Gamma}{\mathrm{argmin}}\,\E_P\Big[\log\Sigma_\gamma(X)+\tr\big\{\Sigma_\gamma(X)^{-1}\Cov_P(Y\given X)\big\}\Big], \notag
    \\
    \gamma_{P,\GEE} &:= \underset{\gamma\in\Gamma}{\mathrm{argmin}}\,\E_P\Big[\big\|\Sigma_\gamma(X)-\varepsilon\varepsilon^T\|_F^2\Big] \notag
    \\
    &\,=\underset{\gamma\in\Gamma}{\mathrm{argmin}}\,\E_P\Big[\big\|\Sigma_\gamma(X)-\Cov_P(Y\given X)\|_F^2\Big] 
    , \label{eq:EQML-GEE-pop}
\end{align}
at the population level, where $\|\cdot\|_F$ denotes the Frobenius norm. In the following example we explore the potential suboptimality of imposing the EQML loss to estimate the dispersion parameters in $\estimators$~\eqref{eq:lin-estimators}, with optimality given in terms of the variance of the $\beta$ estimator. For similar examples also studying the GEE loss function see~\citet{sandboost}.

\subsection{Failure of (extended) quasi-maximum likelihood and generalized estimating equations}

\subsubsection{Longitudinal data example}\label{sec:motivating-eg}

We consider an example in the clustered linear model~\eqref{eq:quasi-LM}, where the conditional covariance is misspecified. 
Suppose we have data consisting of i.i.d.~groups each of size $50$ and following the data generating mechanism
\begin{gather*}
    Y\given X \sim N_{50}\big(X\beta,\,\Omega\big),
    \qquad
    X\sim N_{50}\big({\bf 0},\,0.9\,{\bf 1}{\bf 1}^T+0.1\,I_{50}\big)
    \\
    \Omega_{jk} = \begin{cases}
        1 &\text{if } j=k,
        \\
        e^{-\frac{1}{2}|j-k|^{\frac{1}{4}}} + \frac{1}{4}\cos(|j-k|)e^{-\frac{1}{20}|j-k|} \;&\text{if } j\neq k.
    \end{cases}
\end{gather*}
where $\beta=1$ is the estimand of interest. Note that by construction the model is homoscedastic, with correlation components bounded in $[0.24,0.74]$, and with $\Omega$ of condition number approximately that of an $\text{AR}(1)$ covariance matrix of the same group size and autoregressive parameter $0.9$. 
We consider the classes of weighted least squares estimators~\eqref{eq:lin-estimators} with weights $\Sigma_\gamma$ taking the form of the covariance matrices of nested $\text{ARMA}(p,q)$ processes with $(p,q)\in\{1,2\}\times\{0,1,2\}$. To compare these dispersion models, the Akaike information (AIC) is often used as a quality of fit metric, with variations including: Bayesian information (BIC); corrected AIC (AICc) for longitudinal analysis~\citep{Brockwell}; and the quasi information (QIC) when solving GEEs~\citep{QIC, hardin}. Both AIC and BIC are commonly used to compare mixed effects model structures~\citep{membook}. 

Table~\ref{tab:intro} presents the AIC of each ARMA model fit using (extended) quasi maximum likelihood with each of the ARMA working covariance models, in addition to the mean squared error of their $\beta$ estimators. The colour shading reflects the preferred ordering of the models in terms of AIC and $\beta$ mean squared error. 
We see that out of the six ARMA models, the $\text{AR}(1)$ model appears worst in terms of AIC, yet it is the best model in terms of minimal mean squared $\beta$ estimation error. Similarly, the `best model' in terms of AIC is in fact the second worst model in terms of mean squared $\beta$ estimation error. As such, using quasi-maximum likelihood estimation for estimating dispersion parameters alongside AIC for model selection can select not only a suboptimal dispersion parameter within a single model but also a worse dispersion model altogether. 

\begin{table}[ht]
	\begin{center}
        		\begin{tabular}{cc|cccc}
         & & \multicolumn{3}{c}{$q$}
         \\
	& $N^{-1} \text{AIC}$ & 0 & 1 & 2
        \\
        \hline
  \multirow{2}{*}{$p$} & 1 & \cellcolor{rank06}2.07719 & \cellcolor{rank04}2.04449 & \cellcolor{rank02}2.03271
  \\
  & 2 & \cellcolor{rank05}2.04497 & \cellcolor{rank03}2.04310 & \cellcolor{rank01}2.03269
        \end{tabular}
\quad
\begin{tabular}{c|cccc}
          & \multicolumn{3}{c}{$q$}
         \\
	$N \cdot \text{MSE}(\hat{\beta})$ & 0 & 1 & 2
        \\
        \hline
  1 & \cellcolor{rank01}3.51 & \cellcolor{rank02}3.59 & \cellcolor{rank06}4.33
  \\
  2 & \cellcolor{rank04}3.81 & \cellcolor{rank03}3.70 & \cellcolor{rank05}3.95
        \end{tabular}
	\caption{The AIC and MSE for estimators of $\beta$ in the longitudinal data example of Section~\ref{sec:motivating-eg}. The model fit is a univariate linear model with errors following an $\text{ARMA}(p,q)$ model for $(p,q)\in\{1,2\}\times\{0,1,2\}$. 
 Models are fit using the Gaussian quasi-MLE objective (EQML) via the \texttt{nlme} package \citep{nlme-code}. 
 The empirical sample size used is $N=10^6$, at which size the AIC ordering of the models are identical to those for BIC, AICc and QIC.}\label{tab:intro}
	\end{center}
\end{table}

\subsubsection{Heteroscedastic linear model}

Whilst when the true law lies within the variance model $\varmodel$ the optimal dispersion parameter (that is the dispersion parameter that corresponds to the true conditional variance) is estimated consistently by a range of loss functions (including the EQML and GEE loss), when the true law lies within $\meanmodel\backslash\varmodel$ it is less clear precisely what the dispersion parameter being estimated by these objective functions converge to. As we see in the motivating example of Section~\ref{sec:motivating-eg}, when the quasi-maximum likelihood EQML loss is used to estimate dispersion, the resulting $\beta$ estimator need not coincide with the $\beta$ estimator of minimal variance amongst the class of estimators $\estimators$ as in~\eqref{eq:lin-estimators}. In this section we explore this unfavourable behaviour in a theoretical framework.

In the remainder of this section we focus our attention on the ungrouped, univariate, heteroscedastic Gaussian linear model $\Het$, with the nested submodels
\begin{align*}
    \Hom &:= \Big\{P: Y\given X\sim N\big(X\beta,\,\sigma^2\big),\,\sigma>0\Big\}
    \\
    \VVar &:= \Big\{P: Y\given X\sim N\big(X\beta,\,\nu_\gamma(X)\big),\,\gamma\in\Gamma\Big\}
    \\
    \Het &:= \Big\{P: Y\given X\sim N\big(X\beta,\,\nu_\gamma(X)\big), \, \E[\nu_\gamma(X)]<\infty, \, \nu_\gamma(X)>0 \text{ a.s.}\Big\},
\end{align*}
for a class of variance functions $\{\nu_\gamma:\gamma\in\Gamma\}$, and with random covariate design. In relation to the schematic of Figure~\ref{fig:hierarchy} these models lie in $\Hom\subset\GLM$, $\VVar\subset\varmodel$ and $\Het\subset\meanmodel$; we introduce these submodels to strengthen our theoretical results (Proposition~\ref{prop:divratio} to follow). In this ungrouped setting the class of weighted least squares estimators $\estimators$ takes the form~\eqref{eq:lin-estimators} with $\Sigma_\gamma=\nu_\gamma$, with $\nu_\gamma$ a scalar-valued variance function. 
In particular, we show there exists distributions in the stricter Gaussian model $\Het$ 
where optimising the EQML or GEE loss for dispersion estimation in the class of  weighted least squares estimators~\eqref{eq:lin-estimators} performs suboptimally in terms of the asymptotic variance of the $\beta$ estimator 
\begin{equation}\label{eq:V(gamma)}
    V_P(\gamma) := {\E_P\bigg[\frac{X^2}{\nu_\gamma(X)}\bigg]}^{-2} 
    {\E_P\bigg[\frac{\Var_P(Y\given X)\,X^2}{\nu_\gamma(X)^2}\bigg]} .
\end{equation} 
We call this function the \emph{sandwich loss}, which we later generalise to multilevel generalized linear models (see Section~\ref{sec:SL}), with this terminology motivated from the sandwich estimator of~\citet{huber}.

One might intuitively expect 
if the `working variance' model $\{\nu_\gamma:\,\gamma\in\Gamma\}$ is misspecified but close to the true model then near-efficient estimation could follow; this is perhaps the general thinking behind a practitioner's use of a specific working variance model. 
However we show in the following proposition this is not necessarily the case. 
Recall that the Kullback--Leibler divergence between distributions  $P_1$ and $P_2$ both absolutely continuous with respect to Lebesgue measure is given by
\begin{equation*}
    \mathrm{KL}(P_1\,\|\,P_2) := \int \log\bigg(\frac{dP_1}{dP_2}\bigg)\,dP_1.
\end{equation*}

\begin{prop}\label{prop:divratio}
    Consider the semiparametric, ungrouped, univariate, linear model $\Het$. Also consider the asymptotic (population-level)  variances~\eqref{eq:V(gamma)} of the quasi-maximum likelihood (weighted least squares) estimators corresponding to the working variance model
    \begin{equation}\label{eq:working-var}
        \nu_\gamma
        \in
        \big\{
        x \mapsto \gamma_1\,\ind_{[0,\infty)}(x) + \gamma_2\,\ind_{(-\infty,0)}(x)
        \,:\, \gamma=(\gamma_1,\gamma_2)\in\R_+^2
        \big\},
    \end{equation}
    for the EQML and GEE methods given by~\eqref{eq:EQML-GEE-pop}. 
    Then for any $\tau,\sigma>0$, arbitrarily small $\alpha>0$ and arbitrarily large $\eta\geq1$, there exists a distribution $P\in\Het$ with 
    \begin{gather}
        \inf_{\tilde{P}\in\Hom}\mathrm{KL}(\tilde{P}\,\|\,P)\leq\alpha
        \quad\;
        \Var_P(X)=\tau^2,
        \quad\;
        \E_P[\Var_P(Y\given X)] = \sigma^2, \label{eq:conditions}
        \\
        \frac{V_P(\gamma_{P,\QML}) \wedge V_P(\gamma_{P,\GEE})}{\inf_{\gamma\in\Gamma}V_P(\gamma)} \geq\eta. 
        \label{eq:div-ratio}
    \end{gather}
\end{prop}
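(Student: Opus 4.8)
The plan is to exploit the fact that, for the two-piece working model \eqref{eq:working-var}, the sandwich loss \eqref{eq:V(gamma)}, the EQML and GEE objectives \eqref{eq:EQML-GEE-pop} and the Kullback--Leibler distance to $\Hom$ all reduce to elementary expressions. Writing $v(x):=\Var_P(Y\mid X=x)$, $A_\pm:=\E_P[X^2\ind_{\pm X>0}]$ and $B_\pm:=\E_P[v(X)X^2\ind_{\pm X>0}]$, one has $V_P(\gamma)=(A_+/\gamma_1+A_-/\gamma_2)^{-2}(B_+/\gamma_1^2+B_-/\gamma_2^2)$, which is invariant under common rescaling of $(\gamma_1,\gamma_2)$; hence minimising $V_P$ over $\Gamma=\R_+^2$ is one-dimensional, the stationarity condition gives weights $\propto(1/\tilde v_+,1/\tilde v_-)$ on the two half-lines with $\tilde v_\pm:=B_\pm/A_\pm$, and in particular $\inf_{\gamma\in\Gamma}V_P(\gamma)\le V_P(1,1)=\E_P[v(X)X^2]/\E_P[X^2]^2$. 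Because the two-piece model decouples across $\{X\ge0\}$ and $\{X<0\}$, the scalar versions of both objectives in \eqref{eq:EQML-GEE-pop} are minimised at the \emph{same} $\gamma_{P,\QML}=\gamma_{P,\GEE}=\big(\E_P[v(X)\mid X\ge0],\,\E_P[v(X)\mid X<0]\big)$, so the numerator in \eqref{eq:div-ratio} is $V_P$ evaluated there: the EQML/GEE estimator weights the half-lines by the reciprocals of the \emph{unweighted} conditional means of $v$, rather than the $X^2$-weighted means $\tilde v_\pm$. Finally, taking $\tilde P\in\Hom$ with the same covariate marginal as $P$ and conditional variance $(\E_P[1/v(X)])^{-1}$ shows $\inf_{\tilde P\in\Hom}\mathrm{KL}(\tilde P\,\|\,P)\le\tfrac12\big(\E_P[\log v(X)]+\log\E_P[1/v(X)]\big)$.

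With these formulas I would construct $P$ along a one-parameter family (indexed by $n\to\infty$, selecting one member at the end), taking the covariate law to be a mixture of three narrow bumps --- so that a Lebesgue density exists --- near a point $x_0>0$ close to the origin with small probability $\theta$, near a large positive point $R$ with small probability $\pi$, and near a negative point $-\rho$ with the remaining probability $\psi=1-\theta-\pi$, and taking $v$ equal to a large value $M$ on the $x_0$-bump and to a fixed $c\in(0,\sigma^2)$ elsewhere. The parameters are coupled by: $\E_P[v(X)]=\theta M+(1-\theta)c=\sigma^2$ (so $\theta M\to\sigma^2-c$ and $M\asymp1/\theta\to\infty$); $\Var_P(X)=\tau^2$ (so $\pi R^2\to\tau^2$ and $R\to\infty$); $\rho\to0$, which makes $A_-=\psi\rho^2\to0$ while keeping $A_+\to\tau^2$ and the second coordinate $\E_P[v(X)\mid X<0]$ of $\gamma_{P,\QML}$ equal to $c$; $x_0\to0$; and nested rates $\theta\ll\pi\ll\rho^2$. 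The three facts that drive the argument are: (i) $\E_P[\log v(X)]+\log\E_P[1/v(X)]\asymp\theta\log(M/c)\asymp\theta\log(1/\theta)\to0$, so the KL condition in \eqref{eq:conditions} holds for $n$ large; (ii) since $x_0\to0$ and $\theta M$ stays bounded, the $x_0$-bump contributes negligibly to $B_\pm$, whence $\tilde v_+\to c$ and $\inf_\gamma V_P(\gamma)\le\E_P[v(X)X^2]/\E_P[X^2]^2\to c/\tau^2$; and (iii) the first coordinate $\E_P[v(X)\mid X\ge0]=(\theta M+\pi c)/(\theta+\pi)\asymp(\sigma^2-c)/\pi\to\infty$ of $\gamma_{P,\QML}$ forces the EQML/GEE weight on $\{X\ge0\}$ to collapse --- the nesting $\pi\ll\rho^2$ being exactly what makes $A_+$ times that weight $o(A_-)$ --- so that $V_P(\gamma_{P,\QML})\sim c/A_-\to\infty$. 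Taking the ratio in \eqref{eq:div-ratio} then gives a lower bound $\gtrsim A_+/A_-\to\infty$, so fixing $n$ large enough makes it exceed $\eta$ (and simultaneously keeps the KL below $\alpha$).

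The main obstacle is the genuine tension in the statement: the working model must be merely mildly misspecified (KL-close to $\Hom$) and yet drive EQML and GEE to be arbitrarily inefficient. The resolution --- and the one point needing real care --- is the logarithm-versus-linear discrepancy in (i) and (iii): the KL distance sees the variance spike only through $\theta\log M$, whereas the EQML/GEE dispersion sees it through $\theta M$, so sending $\theta\to0$ and $M\to\infty$ at coupled rates keeps the former bounded by $\alpha$ while sending the latter, hence $V_P(\gamma_{P,\QML})$, to infinity. A secondary geometric point is that the spike must be placed where $X^2$ is tiny (near the origin), so it is invisible to the sandwich loss ($\tilde v_+\to c$) even while it dominates the unweighted conditional mean that EQML and GEE regress onto; checking that $\theta,\pi,\rho,x_0,M,R$ can be chosen simultaneously to enforce $\Var_P(X)=\tau^2$, $\E_P[v(X)]=\sigma^2$, the KL bound, $\tilde v_+\to c$ and $A_-\to0$ is a bookkeeping exercise along the one-parameter family. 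Finally one checks that smoothing the three atoms of the covariate law into bumps of vanishing width perturbs $A_\pm$, $B_\pm$, $\E_P[\log v(X)]$ and $\E_P[1/v(X)]$ only negligibly, so the closed forms above continue to control the limit.
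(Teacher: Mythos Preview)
Your proposal is correct and takes a genuinely different route from the paper's proof. Both exploit the same structural fact you identify: for the two-piece working model, EQML and GEE coincide and target the \emph{unweighted} conditional means $\E[v(X)\mid X\gtrless0]$, whereas the sandwich-optimal weights are governed by the $X^2$-weighted means $\tilde v_\pm=B_\pm/A_\pm$. Where the constructions diverge is in how this mismatch is manufactured. The paper keeps the heteroscedasticity mild and smooth, taking $\sigma^2(x)=c_1+c_2x^2\ind_{x\ge0}$ with $c_2/c_1$ small (this alone secures the KL bound uniformly in the remaining parameter), and then gives $X$ a heavy-tailed density of the form $(1+x^4)^{-1}$ truncated at $\pm\delta$; sending $\delta\to\infty$ makes $\E[X^4\ind_{X\ge0}]$ diverge, which drives $V(\gamma_{\QML})$ to infinity through the numerator term $c_2\E[X^4\ind_{X\ge0}]$ while $\inf_\gamma V(\gamma)$ stays bounded.

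Your construction instead keeps $X$ essentially light-tailed (three narrow bumps) and places the heteroscedasticity in a high, rare spike near the origin. The KL is then controlled by $\theta\log M$ rather than by the smallness of a slope parameter, and the ratio diverges not because a fourth moment blows up but because the EQML/GEE weight on $\{X\ge0\}$ collapses (your point (iii)) while $A_-\to0$. Your argument requires balancing more parameters ($\theta\ll\pi\ll\rho^2$, $x_0\to0$, $R\to\infty$) and a final smoothing step, whereas the paper's is effectively a one-parameter family once $c_2/c_1$ is fixed; on the other hand your write-up makes the ``logarithm versus linear'' tension between the KL and the EQML objective more explicit, which is arguably the conceptual heart of the result. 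One minor point: where you invoke $\inf_\gamma V_P(\gamma)\le V_P(1,1)$, this upper bound is exactly the right direction and suffices; the paper instead appeals to the Cauchy--Schwarz lower bound $\E[X^2/\sigma^2(X)]^{-1}$, which is formally the wrong inequality for bounding the ratio from below, though the conclusion survives because $\inf_\gamma V_P(\gamma)$ is independently bounded along the family.
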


It is perhaps worrying that taking Proposition~\ref{prop:divratio} with $\eta$ arbitrarily large that the EQML and GEE dispersion estimators can result in arbitrarily suboptimal $\beta$ estimation within the class of estimators $\estimators$. Further, by taking $\alpha>0$ arbitrarily small we see that this can occur for distributions arbitrarily close to the working variance model imposed, and even arbitrarily close to the homoscedastic model. It follows that the optimality of dispersion parameter estimation when the working (co)variance model is misspecified crucially hinges on the objective function used, with EQML and GEE based optimisations potentially performing rather poorly. 
These issues however can be overcome by simply proposing~\eqref{eq:V(gamma)} as the loss function for dispersion estimation, thereby targetting the variance of the $\beta$ estimator directly. 
This framework allows for a practitioner to achieve the best of both worlds; sufficiently restricting the class of estimators to be both computationally practical and avoid overfitting, whilst also maintaining optimal accuracy within the class of estimators proposed against an expansive class of statistical models.

\subsection{Organisation of the remainder of the paper}

From hereon we will focus on the setting of clustered data in the multilevel generalized linear model, due to variance gains tending to be more pronounced in practice in this setting. Our methodology however is equally applicable to unclustered generalized linear models. The remainder of this paper is organised as follows. In Section~\ref{sec:lit-rev} we review some relevant literature on multivariate generalized linear models. We outline the semiparametric multilevel generalized linear model in Section~\ref{sec:model}, and introduce the parametric class of fixed effects estimators from which our goal is to select amongst this class the estimator of minimal variance within the full semiparametric model. 
In Section~\ref{sec:sand-reg} we formally define the sandwich loss for this model, and introduce an empirical approximate Jackknife estimator for it in Section~\ref{sec:SL}. 
An estimator for the standard errors of the resulting fixed effects estimators 
is proposed in Section~\ref{sec:variance}, with applications to model selection discussed in Section~\ref{sec:model-selection}. Numerical experiments are explored in Section~\ref{sec:numericals}, both on simulated data and two real-world datasets.

\section{The semiparametric multilevel generalized linear model}\label{sec:model}

Suppose we observe $I$ independent data clusters, with $i$th cluster $(Y_i,X_i)\in\R^{n_i}\times\R^{n_i\times p}$ of group size $n_i$ satisfying the semiparametric multilevel generalized linear model 
\begin{equation}\label{eq:model}
    g\big(\E[Y_i\given X_i]\big) = X_i\beta,
\end{equation}
for some population averaged estimand $\beta\in\R^p$ and known, strictly increasing link function $g$ that we take with an abuse of notation to apply component-wise to a vector in $\R^{n_i}$, and where the covariate design may be interpreted as either random or fixed. Note specifically we make no assumptions on the form of the conditional covariance $\Cov(Y_i\given X_i)$. 
A semiparametric efficient estimator for $\beta$ in this model solves the estimating equation
\begin{gather*}
    \sum_{i=1}^I\psi_{\text{eff}}(Y_i,X_i;\beta)=0
    ,
    \qquad
    \psi_{\text{eff}}(Y,X;\beta) := D(X;\beta)^T\,\Cov(Y\given X)^{-1}\,\big(Y-g^{-1}(X\beta)\big),
    \\
    \big(D(X;\beta)\big)_{jk} := \Big(\frac{1}{g'(g^{-1}(X\beta))}\circ X\Big)_{jk} = \frac{X_{jk}}{g'(g^{-1}(X_{j\boldsymbol{\cdot}}\beta))}
    .
\end{gather*}
where $\psi_{\text{eff}}$ is known as the (scaled) semiparametric efficient influence function; see for example~\citet{tsiatis} for a derivation. This estimator is that of minimal variance over all regular, asymptotically linear estimators for $\beta$ over all distributions in~\eqref{eq:model}. 
To circumvent the need to estimate the non-parametric matrix-valued nuisance function $\Cov(Y\given X)$ 
we consider the class $\estimators$ of quasi-maximum likelihood estimators given by the estimating equations 
\begin{multline}\label{eq:estimators}
    \estimators := 
    \bigg\{
    \tilde{\beta}(\gamma)\in\R^p :\,
    \sum_{i=1}^I \psi(Y_i,X_i;\tilde{\beta}(\gamma),\gamma) = 0
    ,\;
    \psi(Y_i,X_i;\beta,\gamma) = D_i(\beta)W_i(\beta,\gamma)\big(Y_i-\mu_i(\beta)\big)
    ,\\
    D_i(\beta) := \frac{\partial\mu_i(\beta)}{\partial\beta^T}
    ,\;
    \mu_i(\beta) := g^{-1}(X_i\beta)
    ,\;
    W_i(\beta,\gamma) := \Sigma_i(\gamma)^{-1} = A_i(\beta)^{-\frac{1}{2}}
    \mathrm{P}_i(\gamma)^{-1}A_i(\beta)^{-\frac{1}{2}}
    ,\\
    A_i(\beta) := \diag\big(v(\mu_i(\beta))\big)
    ,\;
    \gamma\in\Gamma
    \bigg\},
\end{multline}
where $\mathrm{P}$ is a pseudo-correlation matrix (that may also model variances in misspecified scenarios), $\Sigma$ a full working variance model, $v$ the generalized linear model's variance function, 
and $\Gamma$ a Euclidean space spanning the possible dispersion parameters $\gamma\in\Gamma$. We explore a few examples of models that imply dispersion structures on $\mathrm{P}_i(\gamma)$ and $\Sigma_i(\gamma)$. From hereon we denote dependence on $X_i$ in quantities such as $\mathrm{P}_i(\gamma)$, $\Sigma_i(\gamma)$, $D_i(\beta)$, $\mu_i(\beta)$, $W_i(\beta,\gamma)$ through the $i$ subscript. 
\begin{example}[Classes of Mixed Effects Models]
    Mixed effects models~\citep{glmmbook} have proved popular in practice. Linear mixed effects models, as well as Tweedie generalized mixed effects models~\citep{lee-nelder, lee-nelder-book} with multiplicative random effects and a log-link function, have the popular robustness properties that the `subject specific' and `population averaged' quantities coincide. The Tweedie family of distributions allow for flexible modeling of positive continuous (e.g.~zero-inflated) data as well as count data, with hierarchical and longitudinal extensions proposed by \citet{ma} and \citet{holst} respectively.
\end{example}
\begin{example}[Multivariate Dispersion Families]
    \citet{medf, medf2} construct a class of multivariate dispersion model for generalized linear models with multivariate responses. These provide a general and flexible class of models that include the elliptically contoured distributions.
\end{example}
\begin{example}[ARMA, GARCH, and ARMA-GARCH models]
    For longitudinal data ARMA \citep{box} and GARCH \citep{garch} model structures can be used to model conditional correlation and variance respectively, with ARMA-GARCH models combining both. Together, these imply a parametric structure on $\Cov(Y_i\given X_i)$ 
    which is commonly estimated by (extended quasi) maximum likelihood. See \citet{box} and \citet{Brockwell} for a review.
\end{example}

\subsection{Some related literature in multilevel models}\label{sec:lit-rev}

Given a dispersion model, such as those outlined in the previous subsection, two estimation strategies have dominated both theory and practice. Both follow the general paradigm of a weighted least squares regression of squared residuals to a given parametric covariance structure, namely solving the estimating equation 
\begin{equation}\label{eq:GEE.MLE.disp}
    \sum_{i=1}^I \tr\big( \mathcal{W}_i (\hat{r}_i\hat{r}_i^T - \Sigma_i(\gamma)) \big) = 0,
\end{equation}
for the dispersion parameter $\gamma$, where $\hat{r}_i = Y_i-g^{-1}(X_i\hat{\beta})$ denotes the residuals of $I$ clustered multivariate datapoints, $\Sigma_i(\gamma)$ is some dispersion matrix parametrisation and $\mathcal{W}_i$ is some weighting matrix that will vary depending on the method considered.  
\cite{eqml, severini1} propose the extended quasi maximum likelihood (EQML) estimating equation, that corresponds to maximising the likelihood were the residuals to follow a mean zero multivariate normal distribution with covariance $\Sigma_i(\gamma)$, subsequently adopted by \citet{ma, holst, jorg-knudsen} among others. 
The second method is GEE1~\citep{gee1} (GEE), which determines the dispersion parameters by minimising a weighted least squares on the cross-products of the Pearson residuals \citep{zeger-liang, hardin, ziegler}. In theory, any residual could fall within this GEE framework, with Pearson residuals often employed in practice. In summary, these methods solve the estimating equation~\eqref{eq:GEE.MLE.disp} with
\begin{equation}\label{eq:EQML-GEE}
    \mathcal{W}_i = \begin{cases}
        \Sigma_i(\gamma)^{-1}\frac{\partial\Sigma_i(\gamma)}{\partial\gamma}\Sigma_i(\gamma)^{-1} &\text{for EQML \citep{severini1}}
        \\
        \diag\big(v(\mu_i(\hat{\beta}))\big)^{-1} \frac{\partial\Sigma_i(\gamma)}{\partial\gamma}
        \diag\big(v(\mu_i(\hat{\beta}))\big)^{-1} &\text{for GEE \citep{zeger-liang}}
    \end{cases},
\end{equation}
where $v$ denotes the glm's variance function and $\hat{\beta}$ is a pilot estimator for $\beta$.  
In finite samples both of these objectives have been adapted by adding a perturbation to the estimating equations such that the estimating equation~\eqref{eq:GEE.MLE.disp} remains unbiased \citep{mccullagh-tib, severini2, jorg-knudsen}.

It is often noted that the homoscedasticity assumption, for which variance estimators of fixed effects estimators in generalized linear models often rely on, is often violated in practice. The sandwich estimator of the variance~\citep{huber, white, gourieroux2, royall, zeger-liang} is often employed to allow for asymptotically valid estimates of standard errors in heteroscedastic settings. Finite sample adaptations have been proposed in ordinary least squares regression, including so called `heteroscedasticity consistent' estimators of the variance (HC1--5) \citep{mackinnon-white, HC4-5}, which in recent years has been extended to clustered data~\citep{mackinnon2023fast, grouped-leverage}.

\section{Sandwich regression}\label{sec:sand-reg}

\subsection{The finite sample empirical sandwich loss}\label{sec:SL}

Consider the target of interest being a linear combination $c^T\beta$ of the fixed effects, for some deterministic vector $c\in\R^p$, with an attributed notion of optimality for an estimator $\hat{\beta}$ being minimality of $\Var(c^T\hat{\beta}) = c^T\Var(\hat{\beta})c$ in the class of QML estimators $\estimators$~\eqref{eq:estimators}. 
Recall sandwich regression estimates the optimal dispersion $\gamma$ in $\estimators$ by minimising an estimate of the varaince of the scalar target of interest, in this case $\var(c^T\tilde{\beta}(\gamma))$, which takes the form
\begin{align}\label{eq:V(gamma)-0}
    V(\gamma) &:= \textrm{M}(\gamma)^{-1}\textrm{S}(\gamma)\textrm{M}(\gamma)^{-1},
    \\
    \textrm{M}(\gamma) &:= \frac{1}{I}\sum_{i=1}^I\E\Big[D_i(\tilde{\beta}(\gamma))^TW_i(\tilde{\beta}(\gamma),\gamma) D_i(\tilde{\beta}(\gamma))\Big] \notag
    \\
    \textrm{S}(\gamma) &:= \frac{1}{I}\sum_{i=1}^I\E\Big[D_i(\tilde{\beta}(\gamma))^TW_i(\tilde{\beta}(\gamma),\gamma)\Cov(Y_i\given X_i)W_i(\tilde{\beta}(\gamma),\gamma)D_i(\tilde{\beta}(\gamma))\Big]
    . \notag
\end{align} 
\citet{sandboost} introduce an asymptotically unbiased estimator for this sandwich loss, which performs well in large samples but in finite samples is downwards biased. We therefore introduce the finite sample empirical sandwich loss, given by
\begin{equation}\label{eq:small-SL}
    L_{\SL}(\gamma) := c^T\tilde{V}(\gamma)c,
    \qquad
    \tilde{V}(\gamma) := \frac{1}{I}\sum_{i=1}^I \big(\tilde{\beta}_{(-i)}(\gamma) -\tilde{\beta}(\gamma)\big)\big(\tilde{\beta}_{(-i)}(\gamma) -\tilde{\beta}(\gamma)\big)^T,
\end{equation}
where $\tilde{\beta}_{(-i)}(\gamma)$ solves the analogous estimating equation of~\eqref{eq:estimators} on the $i$th cluster leave-one-out sample. 
This estimator approximates the jackknife estimator of $\Var(\tilde{\beta}(\gamma))$ as in \citet{efron} for a fixed $\gamma$. 
In the linear ($g=\text{id}$), ungrouped ($n_i\equiv 1$), unweighted ($W_i$ constant) setting this estimator reduces to the `hetersocedastic consistent' (HC3) estimator of the OLS estimator's variance; see for example \citet{mackinnon-white, HC4-5} for an overview and \citet{mackinnon2023fast} for a number of relevant simulations. In recent years this HC3 estimator has been extended to the cluster regression setting \citep{grouped-leverage} for the OLS estimator. 

The sandwich loss~\eqref{eq:small-SL} can be calculated in $O(n)$ operations for the linear model, and can be approximated for the generalized linear model at the same computational order, as outlined in Algorithm~\ref{alg:SL}. The computational approximation methods employed here are analogous to those used when performing leave-one-out cross-validation in glms~\citep{hardin, aloocv} by minimising a locally quadratic approximation about the full-sample solution $\tilde{\beta}(\gamma)$ of the leave-one-out objective; 
\begin{equation}\label{eq:beta_min_i}
    \tilde{\beta}_{(-i)}(\gamma) \approx \tilde{\beta}(\gamma) - \bigg(\sum_{i'=1}^I D_{i'}^TW_{i'}D_{i'} - D_i^TW_iD_i\bigg)^{-1} \bigg(D_i^TW_i\big(Y_i-\mu_i(\tilde{\beta}(\gamma))\big)\bigg),
\end{equation}
where for notational convenience we take $D_i:=D_i(\tilde{\beta}(\gamma))$ and $W_i:=W_i(\tilde{\beta}(\gamma),\gamma)$. 
Note that this approximation is exact in the linear model.

{
\RestyleAlgo{ruled}
\begin{algorithm}[ht
]
\SetKwInOut{Notation}{Notation}
\KwIn{Generalized linear model with working dispersion structure as in~\eqref{eq:estimators};
dispersion parameter $\gamma\in\Gamma$ at which the sandwich loss $L_{\SL}(\gamma)$ is to be calculated;
vector $c\in\R^p$ dictating sandwich regression's target.
}

\Notation{On matrix and vector quantities that are dependent on the full-sample $\beta$ estimator $\tilde{\beta}(\gamma)$ and $\gamma$, we omit this dependence i.e.~we notate $D_i = D_i(\tilde{\beta}(\gamma))$, $W_i=W_i(\tilde{\beta}(\gamma),\gamma)$ etc. We also notate $W := \diag(W_1,\ldots,W_I)$, $D := (D_1, \ldots, D_I)$ and their leave-one-out analogies with subscript $(-i)$. We also allow $g^{-1}$ to act component-wise when applied to a vector i.e.~$(g^{-1}(\eta))_j = g^{-1}(\eta_j)$.}

Calculate:
\\
     \quad
     \textbullet\, $\tilde{\beta}(\gamma)$ as the solution of the estimating equation~\eqref{eq:estimators},
     \\
     \quad
     \textbullet\, $\big(D_i^TW_iD_i\big)_{i\in[I]}$ and $D^TWD = \sum_{i=1}^I D_i^TW_iD_i$,
\\
either analytically (in the linear model) or by Fisher-scoring (in the glm). 

\For{$i\in[I]$} {

    $R_i := Y_i - g^{-1}(X_i\tilde{\beta}(\gamma))$,

    $U_i := W_i^{-1} - D_i\big(D^TWD\big)^{-1}D_i^T$,

    $T_i := \big(D_{(-i)}^TW_{(-i)}D_{(-i)}\big)^{-1} = \big(D^TWD\big)^{-1} + \big(D^TWD\big)^{-1}\big(D_i^TU_i^{-1}D_i\big)\big(D^TWD\big)^{-1}$,

    $S_i := D_i^TW_iR_iR_i^TW_iD_i$
    
}

Calculate $\tilde{V}(\gamma) := \sum_{i=1}^I T_iS_iT_i$

Calculate 
$L_{\SL}(\gamma) := c^T\tilde{V}(\gamma)c$

\KwOut{Empirical sandwich loss $L_{\SL}(\gamma)$ evaluated at $\gamma$.}
\caption{Calculating the sandwich loss $L_{\SL}(\gamma)$}
\label{alg:SL}
\end{algorithm}
}

\subsection{Estimating standard errors for fixed effects}\label{sec:variance}

 Often confidence interval construction and test statistics in mixed effects models and GEEs rely on the asymptotic assumption that the dispersion parameter estimator $\hat{\gamma}$ converges in probability to some deterministic vector $\gamma$ at $\sqrt{n}$-rate. Whilst this is reasonable in an asymptotic regime, for small or even moderate samples not accounting for finite sample corrections can result in undercoverage of confidence intervals. These issues for mixed effects models and GEEs at finite samples also pervade sandwich regression. 
 We therefore introduce an estimator of the (co)variance of sandwich regression $\beta$ estimators that takes into account the variation in dispersion estimation. This takes the form of a Jackknife estimator as in \citet{efron}. Note ordinary heteroscedasticity-robust (HC3) estimators of the variance \citep{mackinnon-white,mackinnon2023fast,grouped-leverage,HC4-5} study a Jackknife estimator of the variance for the OLS estimator; we consider a generalisation that additionally allows for dispersion estimation, as well as extends to the generalized linear model.

The estimator for the variance of the sandwich regression $\hat{\beta}$ estimator is given in Algorithm~\ref{alg:V-hat}. We highlight a few key components of this algorithm. First we introduce some notation. Denoting the sandwich loss~\eqref{eq:small-SL} as $L_{\SL}$ with the leave-($i$)-out version $L_{\SL(-i)}$ as~\eqref{eq:small-SL} omitting the $i$th observation, each with minimisers $\hat{\gamma}$ and $\hat{\gamma}_{(-i)}$ respectively. Also recall the functions $\tilde{\beta}, \tilde{\beta}_{(-i)} : \Gamma\to\R^p$. Then we define the terms
\begin{gather*}
    \tilde{\gamma} = \underset{\gamma\in\Gamma}{\argmin} \,L_{\SL}(\gamma),
    \qquad
    \tilde{\gamma}_{(-i)} = \underset{\gamma\in\Gamma}{\argmin} \,L_{\SL(-i)}(\gamma),
    \\
    \hat{\beta}_{(-i)} = \tilde{\beta}_{(-i)}(\hat{\gamma}_{(-i)}),
    \qquad
    \check{\beta}_{(-i)} = \tilde{\beta}_{(-i)}(\hat{\gamma}),
    \qquad
    \hat{\beta} = \tilde{\beta}(\hat{\gamma}).
\end{gather*}
All these terms are used as intermediaries to calculate the Jackknife variance estimator $\hat{V}$, with $\hat{\gamma}_{(-i)}$, $\hat{\beta}_{(-i)}$ and $\check{\beta}_{(-i)}$ approximated. These are made computationally tractable by using Newton-type iterative steps of a locally quadratic approximation of their respective leave-one-out objectives about their full-sample minimisers. 
Algorithm~\ref{alg:V-hat} performs one iterative step to approximate these quantities. 
In substantially smaller samples, where both the above locally quadratic approximation may be unreasonable and the computational considerations at play are less of a burden, multiple iterative steps of the above optimisation can be performed, editing lines 5, 6 and/or 8 of Algorithm~\ref{alg:V-hat} accordingly.

{
\RestyleAlgo{ruled}
\begin{algorithm}[ht]
\SetKwInOut{Notation}{Notation}
\SetKwComment{Comment}{$\triangleright$\ }{}
\KwIn{Generalized linear model with working dispersion structure as in~\eqref{eq:estimators};
estimator for the dispersion parameters $\hat{\gamma}\in\argmin_{\gamma\in\Gamma}L_{\SL}(\gamma)$ (as calculated in Algorithm~\ref{alg:SL}) and corresponding $\beta$ estimator $\hat{\beta}=\tilde{\beta}(\hat{\gamma})$.
}

\Notation{On matrix and vector quantities dependent on $\hat{\beta}$ we omit its dependence i.e.~$D_i=D_i(\hat{\beta})$, $W_i=W_i(\hat{\beta},\hat{\gamma})$ etc. We also notate $W=\diag(W_1,\ldots,W_I)$ and $D=(D_1,\ldots,D_I)$, with the leave-one-out analogies given subscripts $(-i)$.}

Calculate $\hat{R} := Y - g^{-1}(X\hat{\beta}).$

\For{$i \in [I]$} {

$U_i := W_i^{-1} - D_i\big(D^TWD\big)^{-1}D_i^T$.

$T_i := \big(D_{(-i)}^TW_{(-i)}D_{(-i)}\big)^{-1} = \big(D^TWD\big)^{-1} + \big(D^TWD\big)^{-1}\big(D_i^TU_i^{-1}D_i\big)\big(D^TWD\big)^{-1}$.

$\check{\beta}_{(-i)} - \hat{\beta} \approx - T_i \big(D_i^TW_i\hat{R}_i\big) $
 
$\hat{\gamma}_{(-i)} - \hat{\gamma} \approx \Big(
        \frac{\partial^2L_{\SL (-i)}}{\partial\gamma\partial\gamma^T}\Big|_{\gamma=\hat{\gamma}}\Big)^{-1}\Big(\frac{\partial (L_{\SL (-i)}-L_{\SL})}{\partial\gamma}\Big|_{\gamma=\hat{\gamma}}\Big)$ 

$\Delta_{(-i)} := W_{(-i)}(\hat{\gamma}_{(-i)}) - W_{(-i)}(\hat{\gamma})$.

$\hat{\beta}_{(-i)} - \hat{\beta} \approx
     \big(I_p - T_i\big(D_{(-i)}^T\Delta_{(-i)}D_{(-i)}\big) \big)
     \big(\check{\beta}_{(-i)}-\hat{\beta}\big)
     + 
     T_i\big(D_{(-i)}^T\Delta_{(-i)}\hat{R}_{(-i)}\big),$

}

$\hat{V} = \frac{I-1}{I}\sum_{i=1}^I\big(\hat{\beta}_{(-i)}-\hat{\beta}\big)\big(\hat{\beta}_{(-i)}-\hat{\beta}\big)^T$.

\KwOut{An estimator $\hat{V}$ for $\Var(\hat{\beta})$.}
\caption{Constructing an estimator for $\Var(\hat{\beta})$.}
\label{alg:V-hat}
\end{algorithm}
}

\subsection{Dispersion model selection}\label{sec:model-selection}

Constructing estimators in~\eqref{eq:estimators} requires a hypothesised (likely misspecified) working dispersion structure, parametrised by $\gamma\in\Gamma$. In practice the space of working variance functions must be kept small enough to avoid overfitting and remain computationally competitive, yet large enough for a sufficiently reduced variance. As seen in Section~\ref{sec:motivating-eg} model selection criteria such as AIC(c), BIC or QIC - that choose the model which minimises an estimator of a penalised KL divergence - can perform poorly under model misspecification of the dispersion structure. 
The asymptotic variance estimator $\hat{V}$ of Algorithm~\ref{alg:V-hat} could be used as a model selection criterion that is robust to variance model misspecification. To compare $M$ working (co)variance models $\left\{\mathcal{M}_1,\ldots,\mathcal{M}_M\right\}$ one could merely calculate $\hat{V}(\mathcal{M}_1),\ldots,\hat{V}(\mathcal{M}_M)$ for each model as in Algorithm~\ref{alg:V-hat} and select the optimal dispersion model $\hat{\mathcal{M}}:=\argmin_{\mathcal{M}\in\{\mathcal{M}_1,\ldots,\mathcal{M}_M\}}c^T\hat{V}(\mathcal{M})c$. 
Whilst for a large number of models this could be relatively computationally complex, for nested models the computational burden could be alleviated by warm-starting the minimisations of of the augmented $\hat{\gamma}$ by the minimising dispersion in the largest nested model.

\begin{remark}
If post selection inference is desired, the estimator $\hat{V}(\mathcal{M}_{\hat{m}})$ no longer represents the variance of the corresponding $\beta$ estimator (as the uncertainty in $\hat{m}$ remains unaccounted for), as is standard in post selection inference. A simple (but computationally more involved) Jackknife adaptation of Algorithm~\ref{alg:V-hat} could be used in such settings for post selection inference.
\end{remark}

\section{Numerical experiments}\label{sec:numericals}

We explore the empirical properties of sandwich regression on a number of simulated and real-world datasets. Sandwich regression is compared with generalized estimating equations GEE1 (GEE)~\citep{zeger-liang} fit using the \texttt{geepack} package~\citep{geepack-code}, as well as the (extended) quasi maximum likelihood approach (EQML)~\citep{eqml, hall-qMLE-cluster} fit with the \texttt{nlme} package~\citep{nlme-code}.\footnote{In two cases we must lightly adapt the code used in \texttt{nlme} and \texttt{geepack} for our purposes: \texttt{nlme} does not explicitly allow fitting a bimomial glm with dispersion parameters estimated by EQML (Section~\ref{sec:sim-glm}); and \texttt{geepack} does not directly allow for a working covariance structure implied by a mixed effects model (Section~\ref{sec:mercado}). For details see~\url{https://github.com/elliot-young/parametric.sand.reg_simulations}.} 
Section~\ref{sec:sims} explores linear and binomial multilevel models, containing well and misspecified covariance examples, and Section~\ref{sec:real} explores two real data analyses. 

\subsection{Simulations}\label{sec:sims}
For all simulations, we compare sandwich regression with the GEE and EQML estimators as outlined above. We also study an estimator that obtains the dispersion parameter through minimising the large sample empirical sandwich loss as studied by~\citet{sandboost}. 
The unweighted estimator, obtained by fixing the working correlation in~\eqref{eq:estimators} to $\mathrm{P}_i=I_{n_i}$. 
We study a range of well and misspecified scenarios with varying sample size. In all simulations $\beta\in\R$ denotes the scalar parameter of primary interest, for which we wish to construct an estimator of minimal variance. In each case we consider $I$ i.i.d.~realisations of the respective models and vary the number of clusters $I\in[5,200]$.

\subsubsection{The multilevel linear model}\label{sec:sim-lin-mem}

Consider the linear model
\begin{gather*}
    Y_i\given X_i \sim N_4(X_i\beta,\,\Sigma(X_i)),
    \qquad
    X_i \sim N_4( {\bf 0} , I_4 ),
    \\
    \Sigma(X_i)_{jk} := \rho(j,k)\sigma(X_{ij})\sigma(X_{ik}),
    \quad
    \sigma(x) := 1 + \lambda \exp(-2x^2),
    \quad
    \rho(j,k) := \ind_{(j=k)} + 0.5\cdot\ind_{(j\neq k)},
\end{gather*}
where $\beta=1$. 
Note therefore that the case $\lambda=0$ corresponds to the intercept only mixed effects model. We analyse two settings: $\lambda=0$ (homoscedastic), and $\lambda=3$ (heteroscedastic). The class of estimators in each case is taken as~\eqref{eq:estimators} with weights the inverse of the working covariance that arises from the homoscedastic intercept only mixed effects model, and as such the former case is a model well specified setting whereas the latter is model misspecified.

\subsubsection{The multilevel binomial model}\label{sec:sim-glm}

Consider the binomial model satisfying
\begin{gather*}
    Y_{ij}\given X_i \sim \text{Bernoulli}\big(\text{expit}(X_{ij}\beta)\big),
    \qquad
    X_i \iid N_{20}( {\bf 0} \,,\, 0.5\cdot{\bf 1}{\bf 1}^T + 0.5\cdot I_{20} )
    ,
\end{gather*}
where $\beta=1$ and $\text{expit}(\eta):=\exp(\eta)/(1+\exp(\eta))$. We study two models for the intra-cluster correlation structure:

\medskip

\noindent{\bf Well specified covariance structure: } We consider an equicorrelated setting where
\begin{equation*}
    \Corr(Y_{ij}, Y_{ik}) \approx 0.4,
\end{equation*}
for $j\neq k$, 
generated through the copula
\begin{equation*}
    Z_i \sim %\iid
    N_{20}( {\bf 0} , 0.6\cdot{\bf 1}{\bf 1}^T+0.4\cdot I_{20} ),
    \qquad
    U_{ij} := \Phi(Z_{ij}),
    \qquad
    Y_{ij} := \ind(\{ U_{ij} \geq 1-\text{expit}(X_{ij}\beta) \}) ,
\end{equation*}
where $\Phi$ is the cumulative distribution function of the standard normal distribution and is applied component-wise to $Z_i$. 
Modeling the covariance with equicorrelated structure, all estimation methods are well specified.

\medskip

\noindent{\bf Misspecified model: } Consider the data generation mechanism given by the copula
\begin{gather*}
    Z_i \iid \text{ARMA}(2,2),
    \qquad
    U_{ij} := \Phi(Z_{ij}),
    \qquad
    Y_{ij} := \ind( \{U_{ij} \geq 1-\text{expit}(X_{ij}\beta)\} ) ,
\end{gather*}
where $Z_i$ is multivariate normal with covariance matrix consisting of the first 20 terms of an $\text{ARMA}(2,2)$ model with autoregressive and moving average parameters $(0.4,0.5)$ and $(-0.9,0.4)$ 
respectively. 
The resulting correlation structure is therefore $\Corr(Y_{ij},Y_{ik}) = \rho_{|j-k|}$ with 
\begin{multline*}
    \rho = ( 1.00 ,\, -0.07 ,\, 0.41 ,\, 0.12 ,\, 0.25 ,\, 0.16 ,\, 0.19 ,\, 0.16 ,\, 0.16 ,\, 0.14 ,\, 0.14 ,\,
    \\
    0.13 ,\, 0.12 ,\, 0.12 ,\, 0.11 ,\, 0.11 ,\, 0.10 ,\, 0.10 ,\, 0.09 ,\, 0.09 ).
\end{multline*}
We fit a multilevel binomial generalized linear model to this longitudinal dataset with an $\text{AR}(1)$ working correlation, thus falls within a correlation misspecified setting.

\bigskip
The mean squared errors of all estimators are given in Figure~\ref{fig:mse_sims}. As expected, we see comparable performance at moderate sample sizes when the covariance model is well specified, but substantially improved performance of sandwich regression amongst the misspecified settings. In fact in the binomial misspecified model example both GEE and EQML methods see larger mean squared error over the baseline unweighted estimator. 
%Interestingly in the binomial well specified covariance model, for larger samples ($I\geq 100$) the sandwich regression marginally outperforms GEE and EQML. Note that in this setting `well specified' only refers to well specification of the covariance, and not the full distributional assumptions.
We also observe the benefit at small samples of our sandwich regression approach that utilises the finite sample sandwich loss, over the asymptotically justified alternative of the large sample sandwich loss.

\begin{figure}[ht]
    \centering
    \includegraphics[width=0.9\textwidth]{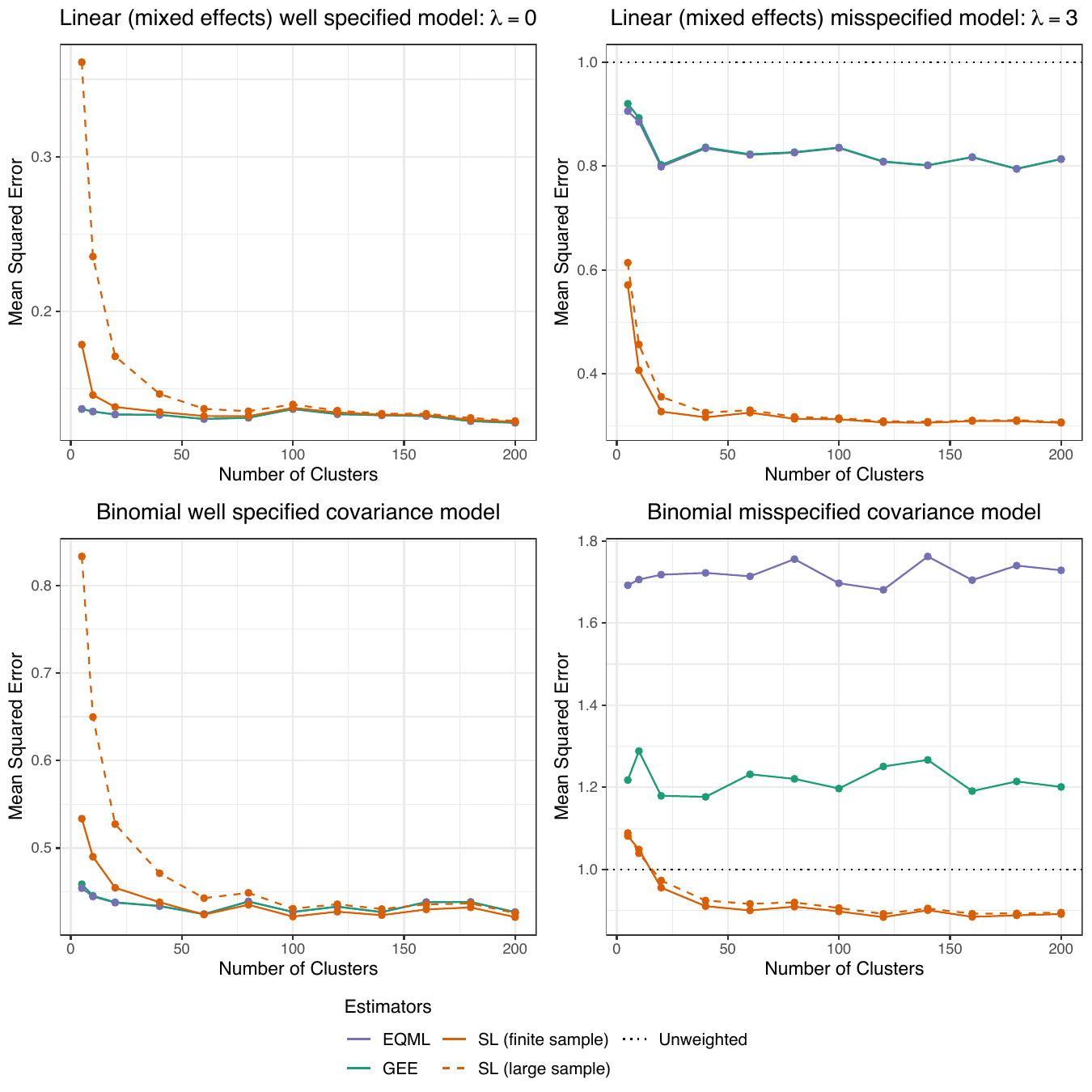}
    \caption{Mean squared error of $\beta$ estimators (relative to the unweighted estimator) in the simulations of Section~\ref{sec:sims}.}
    \label{fig:mse_sims}
\end{figure}

\subsection{Real world examples}\label{sec:real}

We study two examples to demonstrate the benefits of sandwich regression on real world datasets. The first uses a linear mixed effects model for improved inference on a causal quantity of interest in a randomised control study. The second explores a binomial glm for improved estimation of pointwise prediction estimates of regression functions.

\subsubsection{Effect of competition on market price - Linear mixed effects model example}\label{sec:mercado}

We consider a study from \citet{mercados}, which studies a randomised control trial analysing the effect of competition on prices in 72 mercados in the Dominican Republic (51 of which have at least one new retail firm added to the mercado and 21 of which do not). The dataset consists of store-level data measuring deemed prices before and after treatment, in total consisting of 1,926 observations across 72 mercados (with each mercado containing between 20 and 55 stores). This dataset has also recently been analysed by \citet{grouped-leverage}. In both analyses the following linear model is employed
\begin{equation*}
    \log\big(\text{deemed price}\big)_{ij} = \beta \cdot \ind\bigg(\parbox{11em}{\centering at least one new store\\introduced to mercado $i$}\bigg)  +  X_{ij}\alpha  +  \varepsilon_{ij},
\end{equation*}
where $(i,j)$ indexes the $j$th store in the $i$th mercado, $X_{ij}\in\R^{1\times 18}$ contains control variables of: an intercept term, the lagged deemed price (before treatment), a lagged quality index, the pre-treatment number of retailers in each district, eight provincial fixed effects, total district beneficiaries of a conditional cash transfer program, average market income, two market education measures, and a binary urban status for the market. Here $\varepsilon_{ij}$ denotes a conditional mean zero error term. 
The target parameter of interest is $\beta$; the effect of competition on deemed price. Note that both studies consider an (unweighted) clustered OLS estimate for $\beta$, with \citet{mercados} in a homoscedastic Gaussian model and \citet{grouped-leverage} in the (heteroscedastic) linear model, the latter introducing `cluster-robust' estimates for the variance of the OLS estimator. 

To construct a lower variance estimate of $\beta$ we consider the class of QML estimators for $\beta$ implied by a mixed effects model on $\varepsilon_{ij}$, specifically
\begin{equation}\label{eq:mercado-randomeffects}
    \varepsilon_{ij} = Z_{ij}u + \epsilon_{ij},
    \qquad
    Z_{ij} = \left( 1,z_{ij},z_{ij}^2 \right),
    \qquad
    z_{ij} = \Big(\parbox{6em}{\centering lagged log\\deemed price}\Big)_{ij}
\end{equation}
where $u\sim N_4({\bf 0},\mathcal{V}_u)$ are the random effects with positive definite covariance matrix $\mathcal{V}_u$, and $\epsilon_{ij}\iid N(0,\sigma^2)$ for some $\sigma>0$, and with $u\independent\epsilon_{ij}$. 

Table~\ref{tab:mercado} presents the $\beta$ estimates for each of an unweighted (OLS), mixed effects model, generalized estimating equation (GEE) and sandwich regression estimator, the latter three all selected within the class of estimators~\eqref{eq:estimators} with weights following the inverse  `working covariance' given by~\eqref{eq:mercado-randomeffects} i.e.~the only difference in estimation being the loss function used to estimate the dispersion parameters~$(\mathcal{V}_u,\sigma)$. Here the mixed effects model estimator effectively mimics the EQML estimator of~\eqref{eq:EQML-GEE}, albeit with an restricted likelihood loss. 
Also included in Table~\ref{tab:mercado} are estimates $\hat{V}$ for the variance of each estimator, all of which are calculated using the Jackknife estimator of the variance; as a result all these estimators are robust to misspecification of the random effects structure used (in contrast to e.g.~the estimators of variance typically outputted by mixed effects models packages such as \texttt{lme4}, which are not robust to such misspecification). 
We perform 10 Newton-type iterations initialised at the full sample loss to approximate the leave-one-out dispersion parameter estimates in Algorithm~\ref{alg:V-hat}. 
As we see from Table~\ref{tab:mercado} the sandwich regression estimator outperforms all other estimators in terms of variance, with the GEE and mixed effects estimators providing little to no improvements over the unweighted OLS estimator. 
Note the 95\% confidence intervals for $\beta$ constructed with each estimator overlap (in fact all their 20\% confidence intervals overlap). 

\begin{table}[ht]
	\begin{center}
  \begin{tabular}{ccccc}
  \toprule
	       Method & {\begin{tabular}{@{}c@{}c@{}}$\hat{\beta}$\\ $(\times 10^{-2})$ \end{tabular}} & {\begin{tabular}{@{}c@{}c@{}}$\hat{V}$\\$(\times 10^{-5})$ \end{tabular}} & {\begin{tabular}{@{}c@{}c@{}}Reduction in $\hat{V}$ relative to\\  unweighted (OLS) estimator (\%)\end{tabular}}
        \\
        \midrule
        Unweighted (OLS) & -1.47 & 8.29 & 0\%
        \\
        Sandwich Regression & -1.06 & {\bf 6.76} & {\bf 18.4\%}
        \\
        Mixed Effects Model & -1.25 & 7.91 &  4.6\%
        \\
        GEE & -1.45 & 8.51 & -2.7\%
        \\
        \bottomrule
        \end{tabular}
	\caption{Estimates of the effect of competition on price in the mercado pricing dataset of \citet{mercados}.}\label{tab:mercado}
	\end{center}
\end{table}

\subsubsection{1988 Bangladesh Fertility Survey - Binomial generalized linear model example}\label{sec:bang}

We study the 1988 Bangladesh Fertility Survey \citep{bang2}, which comprises of 1934 women in 60 districts. %This dataset has been studied by a number of authors [REFS: ]. 
We fit the binomial model
\begin{gather*}
    \ind\bigg(\parbox{10em}{\centering woman $j$ in district $i$\\using contraception}\bigg) \sim \text{Bernoulli}(p_{ij}),
    \\
    g(p_{ij}) = \beta_1
    + \beta_2 \ind\bigg(\parbox{2.5em}{\centering 1 \\ child}\bigg)_{ij}
    + \beta_3 \ind\bigg(\parbox{3.5em}{\centering 2 \\ children}\bigg)_{ij}
    + \beta_4 \ind\bigg(\parbox{4em}{\centering at least 2 \\ children}\bigg)_{ij}
    + \beta_5 \ind\bigg(\parbox{3em}{\centering urban \\ area}\bigg)_{ij}
    \\
    \hspace{10em}
    + \beta_6 \big(\text{age}\big)_{ij}
    + \beta_7 \big(\text{age}\big)^2_{ij},
    %+ \beta_8 \big(\text{age}\big)^3_{ij},
\end{gather*}
where $g(p) = \text{logit}(p) := \log(p/(1-p))$ is the canonical link function. Here we consider a function as the target of interest, specifically the probability that a woman with one child in an urban area of a given age is using contraception i.e.~$\text{age}\mapsto g^{-1}\big((1,1,0,0,1,\text{age},\text{(age)}^2)^T\beta\big)$. %We compare sandwich regression with EQML and GEE. 
For all estimators a single parameter equicorrelated working correlation structure is imposed. Figure~\ref{fig:real.data.glm} plots the estimators for the pointwise asymptotic variances of $g^{-1}\big((1,1,0,0,1,\text{age},\text{(age)}^2)^T\hat{\beta}\big)$, obtained via the delta method. 
We see that sandwich regression outperforms all the competing estimators over all ages, with improvements more pronounced at some ages than others. 

\begin{figure}[ht]
    \centering
    \includegraphics[width=0.7\textwidth]{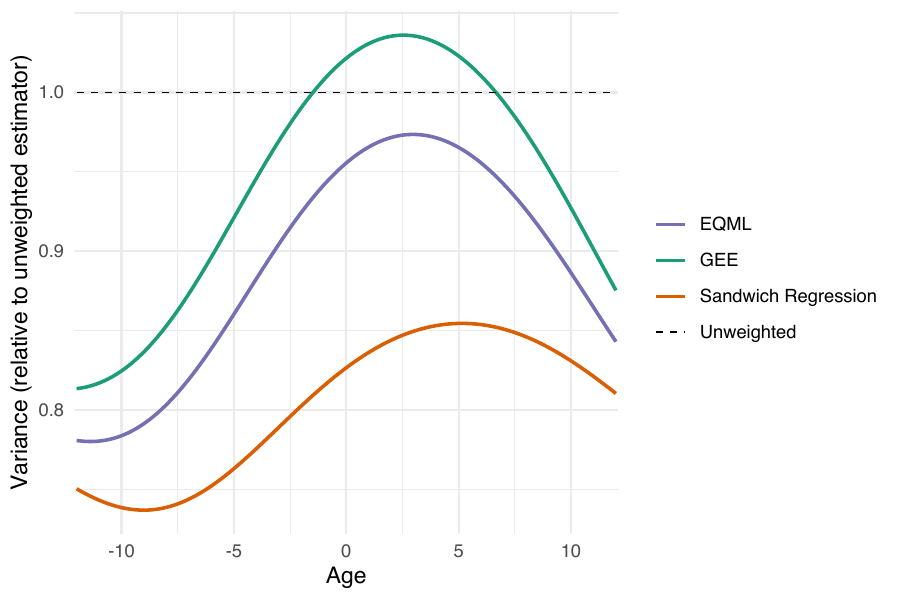}
    \caption{Estimates of  $\Var\,g^{-1}\big((1,1,0,0,1,\text{age},\text{(age)}^2)^T\hat{\beta}\big)$ for different ages.
    }
    \label{fig:real.data.glm}
\end{figure}

\section{Discussion}
It is commonly said model misspecification should be recognised in estimation tasks~\citep{box, assumption-lean}. In settings where a practitioner only has access to a moderate quantity of data it may be unreasonable to make the minimal possible identifiability assumptions on the data generating distribution, and some form of model structure may be relied on. We have studied a broad class of (multilevel) generalized linear models where the first conditional moment of the response takes a parametric form, but no additional assumptions are made on the conditional covariance. %, 
% thus allowing for a finite dimensional class of quasi-maximum likelihood estimators for the fixed effects to be constructed, motivated through a hypothesised `working variance' model, which is computationally tractable but loses efficiency when misspecified. 
% We recover a notion of optimality within a user-chosen restricted class of estimators by minimising the sandwich loss. %, in the sense that we pick amongst the class of estimators that which achieves minimal variance regardless of whether or not the true distribution lies in the postulated covariance model. 
% A finite sample empirical estimator for the sandwich loss is used for this purpose, which is seen to work effectively on a number of simulated and real-world datasets. 
We recover a notion of optimality within a user-chosen restricted class of estimators by minimising an empirical estimator of the sandwich loss, that is seen to work effectively on a number of simulated and real-world datasets.

The idea of relinquishing efficiency in light of computational or robustness considerations is certainly not new. %, with locally efficient estimators and generalized estimating equation doing precisely this, albeit at the cost of a lack of optimality when it comes to variance misspecification outside a specific local model. Further, 
Asking for a sense of global optimality given constraints on the estimator class has been studied for certain shape constrained classes: \citet{hampel} considers minimal variance estimation given an upper bound on the gross error sensitivity; and \citet{feng} consider convexity constraints on the loss function used in linear regression. 
Broadly speaking, this work aims to add to a growing body of work that maintains notions of optimality globally within a broad class of models given constraints on the class of estimators considered. 
Given the increasingly complex data structures studied in practice, and correspondingly the potential necessity for constrained estimation, we look forward to seeing future developments and insights more broadly in this field.

% \paragraph*{Data availability.} The data supporting the findings of this study are available at \url{https://www.openicpsr.org/openicpsr/project/113651/version/V1/view} and \url{https://cran.r-project.org/web/packages/epiDisplay/index.html}, and all code used to produce the numerical results in this article is available at \url{https://github.com/elliot-young/sandwich_regression}.

\newpage

\appendix

{
\Large\bf
\begin{center}
Supplementary material for ‘Sandwich regression for accurate and robust estimation in
generalized linear multilevel and longitudinal models’, by Elliot H. Young and Rajen D. Shah
\end{center}
}

\section{Proofs of results}

\begin{proof}[Proof of Proposition~\ref{prop:divratio}]
    We define a family of distributions $P_\delta$ for $(Y,X)$ in terms of the constant $\delta > 0$, all of which satisfy the conditions~\eqref{eq:conditions}; to obtain the lower bound~\eqref{eq:div-ratio} we will take $\delta$ sufficiently large.
    
    We first generate a class of marginal distributions on $X$ parametrised by $\delta>0$. Define the constant
    %$$\lambda_2 := 1-\frac{\tau^2}{2}\rho(\delta)$$
    $$\lambda_2 := \frac{1}{2}\vee\bigg(1-\frac{\tau^2}{2}\bigg),$$ noting $\lambda_2\in\big[\frac{1}{2},1\big)$. 
    Also, given $\delta>0$ define the functions
    \begin{gather*}
        \lambda_1(\delta) := \frac{1-\lambda_2}{2\int_0^\delta\frac{1}{1+x^4}dx},
        \qquad
        \nu^2(\delta) := \frac{\tau^2}{\lambda_2} - \frac{1-\lambda_2}{\lambda_2}\rho(\delta),
        \qquad
        \rho(\delta) := \frac{\int_0^\delta\frac{x^2}{1+x^4}dx}{\int_0^\delta\frac{1}{1+x^4}dx}.
    \end{gather*}
    It can be verified that $\rho(\delta)\in(0,1)$ for all $\delta>0$ (as $\rho$ can be shown to be a strictly increasing function with $\lim_{\delta\downarrow0}\rho(\delta)=0$ and $\lim_{\delta\to\infty}\rho(\delta)=1$) and thus
    $$\inf_{\delta>0}\nu^2(\delta) \geq \frac{\tau^2-1+\lambda_2}{\lambda_2} = 1-\frac{2(1-\tau^2)}{1\vee(2-\tau^2)} > 0,$$
    noting $\tau>0$. 
    Now take $X$ to have the density
    \begin{equation}\label{eq:X}
        p_\delta(x) = \frac{\lambda_1(\delta)}{1+x^4}\ind_{\{|x|\leq\delta\}} + \frac{\lambda_2}{\sqrt{2\pi\nu^2(\delta)}}e^{-\frac{x^2}{2\nu^2(\delta)}}.
    \end{equation}
    Note in particular that $p_\delta$ is positive on $\R$ with
    \begin{align*}
        \int_{\R} p_\delta(x)dx &= \lambda_1(\delta)\int_{-\delta}^\delta\frac{1}{1+x^4}dx + \lambda_2 = 1-\lambda_2+\lambda_2=1,
        \\
        \int_{\R} x p_\delta(x)dx &= 0,
        \\
        \int_{\R} x^2p_\delta(x)dx &= 2\lambda_1(\delta)\int_0^\delta\frac{x^2}{1+x^4}dx + \lambda_2\nu^2(\delta) = (1-\lambda_2)\lambda_1(\delta) + \tau^2 - (1-\lambda_2)\lambda_1(\delta) = \tau^2.
    \end{align*}
    Thus for any $\delta>0$ if $X$ follows the distribution with density $p_\delta$ then $\int_{\R} x^2 p_\delta(x)dx = \tau^2$. 
    Further, 
    \begin{equation*}
        \int_{\R} x^4p_\delta(x)dx 
        = 2\lambda_1(\delta)\int_0^\delta\frac{x^4}{1+x^4}dx
        = (1-\lambda_2) B(\delta),
    \end{equation*}
    where
    \begin{equation}\label{eq:B(delta)}
        B(\delta) := \frac{\int_0^\delta\frac{x^4}{1+x^4}dx}{\int_0^\delta\frac{1}{1+x^4}dx}.
    \end{equation}
    
    We now define a conditional law for $Y\given X$. Take
     \begin{equation}\label{eq:Y|X}
        Y\given X \sim N\big( X\beta, \,\sigma^2(X)\big),
        \qquad
        \sigma^2(X) := c_1 + c_2 X^2\ind_{[0,\infty)}(X),
    \end{equation}
    for a pair of positive constants $(c_1,c_2)$ satisfying $c_1+\frac{c_2\tau^2}{2}=\sigma^2$ and $c_2\leq c_1$, to be determined shortly. 
    We denote by $P_\delta$ the joint distribution on $(Y,X)$ admitting the density $p(y\given x) p_\delta(x)$ where $p(y\given x)$ is the density given by~\eqref{eq:Y|X} and $p_\delta(x)$ is the density~\eqref{eq:X}. 
    We also introduce the law $P'\in\Hom$ to be the distribution with conditional distribution $Y\given X\sim N(X\beta, c_1)$, for which we denote the corresponding conditional density $p'(y\given x)$, and with marginal density for $X$ given by $p_\delta(x)$. Then 
    \begin{align*}
        \inf_{\tilde{P}\in\Hom}\mathrm{KL}(\tilde{P}\,\|\,P_\delta)
        &=
        \mathrm{KL}(P'\,\|\,P_\delta)
        \\
        &=
        \int_{\R^2} p'(y\given x)p_\delta(x)\log\bigg(\frac{p'(y\given x)p_\delta(x)}{p(y\given x)p_\delta(x)}\bigg)\,dydx
        \\
        &=
        \int_{\R} p_\delta(x) \int_{\R} p'(y\given x)\log\bigg(\frac{p'(y\given x)}{p(y\given x)}\bigg)\,dydx
        \\
        &=
        \frac{1}{2}\int_{\R} p_\delta(x)\Big\{\log\big(1+c_1^{-1}c_2x^2\ind_{[0,\infty)}(x)\big) + \frac{1}{1+c_1^{-1}c_2x^2\ind_{[0,\infty)}(x)}-1\Big\}\,dx
        \\
        &= \frac{1}{2}\int_0^\infty p_\delta(x)\Big\{\log\big(1+c_1^{-1}c_2x^2\big) + \frac{1}{1+c_1^{-1}c_2x^2} - 1 \Big\}\,dx
        \\
        & = I\big(c_1^{-1}c_2, \delta\big),
    \end{align*}
    where
    \begin{equation*}
        I(c,\delta) := \frac{1}{2}\int_0^\infty p_\delta(x) \Big\{\log\big(1+cx^2\big) + \frac{1}{1+cx^2} - 1 \Big\}\,dx,
    \end{equation*}
    and where we use the well known result
    \begin{equation*}
        \mathrm{KL}\big(N(\mu,\sigma_1^2)\,\|\,N(\mu,\sigma_2^2)\big) = \frac{1}{2}\bigg\{\log\bigg(\frac{\sigma_2^2}{\sigma_1^2}\bigg) + \frac{\sigma_1^2}{\sigma_2^2} - 1 \bigg\},
    \end{equation*}
    in addition to using Fubini's Theorem, which can be applied as
    \begin{align*}
        &\quad\;
        \sup_{\substack{\delta,c_1,c_2\in\R_+\\c_2\leq c_1}}\int_{\R}\int_{\R}\bigg|p'(y\given x)p_\delta(x)\log\bigg(\frac{p'(y\given x)}{p(y\given x)}\bigg)\bigg|\,dydx
        \\
        &\leq \sup_{\substack{\delta,c_1,c_2\in\R_+\\c_2\leq c_1}} \frac{1}{2}\int_{\R}p_\delta(x)\int_{\R}p'(y\given x)\bigg|\log\bigg(\frac{c_1+c_2x^2\ind_{[0,\infty)}(x)}{c_1}\bigg)
        \\
        &\qquad\qquad
        +\bigg(\frac{1}{c_1+c_2x^2\ind_{[0,\infty)}(x)}-\frac{1}{c_1}\bigg)(y-x\beta)^2\bigg|\,dydx
        \\
        &\leq \sup_{\substack{\delta,c_1,c_2\in\R_+\\c_2\leq c_1}} \frac{1}{2}\int_{\R}p_\delta(x)\bigg|\log\big(1+c_1^{-1}c_2x^2\ind_{[0,\infty)}(x)\big)
        +\frac{1}{1+c_1^{-1}c_2x^2\ind_{[0,\infty)}(x)}-1\bigg|\,dx
        \\
        &\leq
        \sup_{\substack{\delta,c_1,c_2\in\R_+\\c_2\leq c_1}}\frac{1}{2}\int_{\R} p_\delta(x)\bigg( \underbrace{\log\big(1+c_1^{-1}c_2x^2\ind_{[0,\infty)}(x)\big)}_{\leq\log(1+x^2)\leq x^2} + \underbrace{\bigg|\frac{1}{1+c_1^{-1}c_2x^2\ind_{[0,\infty)}(x)}-1 \bigg|}_{\leq1}\bigg) dx
        \\
        &\leq \frac{\tau^2+1}{2} < \infty.
    \end{align*}
    The integral function $I(c,\delta)$ 
    is well-defined on  $(0,1]\times\R_+$, and can also easily be shown to be differentiable (its derivatives take explicit forms in terms of integrals that are also well-defined). Thus $I(c,\delta)$ is continuous on $(0,1]\times\R_+$. Further %$\sup_{\delta>0} I(c,\delta)$ is bounded. 
    \begin{multline*}
        \sup_{c\in(0,1]}\sup_{\delta>0} I(c,\delta) 
        = 
        \frac{1}{2}\sup_{c\in(0,1]}\sup_{\delta>0}\int_0^\infty p_\delta(x)\Big|\log\big(1+cx^2\big)+\frac{1}{1+cx^2}-1\Big|
        \\
        \leq
        \frac{1}{2}\sup_{c\in(0,1]}\sup_{\delta>0}\int_0^\infty p_\delta(x)\Big\{\underbrace{\log\big(1+cx^2\big)}_{\leq\log(1+x^2)\leq x^2}+\underbrace{\Big|\frac{1}{1+cx^2}-1\Big|}_{\leq1}\Big\}
        \\
        \leq
        \frac{1}{2}\sup_{c\in(0,1]}\sup_{\delta>0}\Big(\int_0^\infty x^2p_\delta(x)dx + \int_0^\infty p_\delta(x)dx\Big)
        \leq
        \frac{\tau^2+1}{4}
        ,
    \end{multline*}
    is bounded. 
    Also as $I(0,\delta)=0$ for all $\delta>0$ it follows that $i(c):=\sup_{\delta>0}I(c,\delta)$ is continuous on $(0,1]$ with $i(0)=0$. Therefore for any $\alpha>0$ there exists a $\tilde{c}\in(0,1]$ such that $i(\tilde{c})\leq\alpha$, and thus
    \begin{equation*}
        \sup_{\delta>0}I(\tilde{c},\delta) \leq \alpha.
    \end{equation*}
    Now take 
    \begin{equation}\label{eq:c1c2}
        c_1 := \frac{2\sigma^2}{2+\tilde{c}\tau^2}
        ,\qquad
        c_2 := \frac{2\tilde{c}\sigma^2}{2+\tilde{c}\tau^2}.
    \end{equation}
    It then follows that
    \begin{equation*}
        \inf_{\tilde{P}\in\Hom}\mathrm{KL}(\tilde{P}\,\|\,P_\delta)
        \leq I(c_1^{-1}c_2,\delta) \leq \sup_{\delta>0}I(\tilde{c},\delta) \leq \alpha,
    \end{equation*}
    for all $\delta>0$.  
    Additionally, by construction of $(c_1,c_2)$ in~\eqref{eq:c1c2},
    \begin{equation*}
        \E_{P_\delta}[\Var_{P_\delta}(Y\given X)] = c_1 + c_2\,\E_{P_\delta}[X^2\ind_{[0,\infty)}(X)]
        =
        c_1 + \frac{c_2\tau^2}{2} = \sigma^2.
    \end{equation*}
    Note similar arguments could be employed to construct a $(c_1,c_2)$ pair such that additionally for all $\delta>0$ we have $\inf_{\tilde{P}\in\Hom}\mathrm{KL}(P_\delta\,\|\,\tilde{P})\leq\alpha$.

    To summarise, to this point we have constructed a class of laws $P_\delta$ that satisfy
    \begin{equation*}
        \inf_{\tilde{P}\in\Hom}\mathrm{KL}(\tilde{P}\,\|\,P_\delta),
        \quad
        \Var_{P_\delta}(X)=\tau^2,
        \quad
        \E_{P_\delta}[\Var_{P_\delta}(Y\given X)]=\sigma^2,
    \end{equation*}
    for any $\delta>0$. The remainder of the proof will construct a specific law within this class such that~\eqref{eq:div-ratio} also holds. For the remainder of the proof unless stated otherwise expectations, variances and dispersion parameters $\gamma$ should be taken as with respect to $P_\delta$.

    Now, it can be shown that the minimisers of each of the EQML and GEE losses (see e.g.~\citet{sandboost}) over the class~\eqref{eq:working-var} 
    are
    \begin{align*}
        \gamma_{\QML,1} &= \gamma_{\GEE,1} = \E[\sigma^2(X)\given X\geq0] = c_1 + c_2\tau^2,
        \\
        \gamma_{\QML,2} &= \gamma_{\GEE,2} = \E[\sigma^2(X)\given X<0] = c_1,
    \end{align*}
    and thus writing $\tilde{\gamma}:=\gamma_{\QML}=\gamma_{\GEE}$ it follows that
    \begin{equation*}
        V(\gamma_{\QML}) = V(\gamma_{\GEE}) = \E\bigg[\frac{X^2}{\nu_{\tilde{\gamma}}(X)}\bigg]^{-2}\E\bigg[\frac{X^2\sigma^2(X)}{\nu^2_{\tilde{\gamma}}(X)}\bigg].
    \end{equation*}
    Further, it follows immediately from the Cauchy--Schwarz Theorem that
    \begin{equation*}
        \inf_{\gamma\in\Gamma}V(\gamma) = \E\bigg[\frac{X^2}{\sigma^2(X)}\bigg]^{-1}.
    \end{equation*}
    Thus
    \begin{align*}
        &\quad
        \frac{V(\gamma_{\QML})\wedge V(\gamma_{\GEE})}{\inf_{\gamma\in\Gamma}V(\gamma)}
        \\
        &=
        \E\bigg[\frac{X^2}{\nu_{\tilde{\gamma}}(X)}\bigg]^{-2}\E\bigg[\frac{X^2}{\sigma^2(X)}\bigg]\E\bigg[\frac{X^2\sigma^2(X)}{\nu^2_{\tilde{\gamma}}(X)}\bigg]
        \\
        &=
        \Big(\tilde{\gamma}_1^{-1}\E\big[X^2\ind_{[0,\infty)}(X)\big]+\tilde{\gamma}_2^{-1}\E\big[X^2\ind_{(-\infty,0)}(X)\big]\Big)^{-2}
        \\
        &\qquad\cdot
        \Big(\tilde{\gamma}_1^{-2}\E\big[X^2\sigma^2(X)\ind_{[0,\infty)}(X)\big]+\tilde{\gamma}_2^{-2}\E\big[X^2\sigma^2(X)\ind_{(-\infty,0)}(X)\big]\Big)
        \\
        &\qquad\cdot
        \bigg(\E\bigg[\frac{X^2}{\sigma^2(X)}\ind_{[0,\infty)}(X)\bigg]+\E\bigg[\frac{X^2}{\sigma^2(X)}\ind_{(-\infty,0)}(X)\bigg]\bigg)
        \\
        &=
        \Big((c_1+c_2\tau^2)^{-1}\E\big[X^2\ind_{[0,\infty)}(X)\big]+c_1^{-1}\E\big[X^2\ind_{(-\infty,0)}(X)\big]\Big)^{-2}
        \\
        &\qquad
        \cdot
        {\Big((c_1+c_2\tau^2)^{-2}\big(c_1\E\big[X^2\ind_{[0,\infty)}(X)\big]+c_2\E\big[X^4\ind_{[0,\infty)}(X)\big]\big)+c_1^{-1}\E\big[X^2\ind_{(-\infty,0)}(X)\big]\Big)}
        \\
        &\qquad \cdot\bigg(\E\bigg[\frac{X^2}{c_1+c_2X^2}\ind_{[0,\infty)}(X)\bigg]+c_1^{-1}\E\big[X^2\ind_{(-\infty,0)}(X)\big]\bigg)
        \\
        &\geq
        \frac{(c_1+c_2\tau^2)^{-2}c_2\E[X^4\ind_{[0,\infty)}(X)]\cdot c_1^{-1}\E[X^2\ind_{(-\infty,0)}(X)]}{\big(\frac{1}{c_1+c_2\tau^2}+\frac{1}{c_1}\big)^2\big(\frac{\tau^2}{2}\big)^2}
        \\
        &=
        \frac{c_1c_2}{2\tau^2(c_1+2c_2
        \tau^2)}\,\E[X^4\ind_{[0,\infty)}(X)]
        \\
        &=
        \frac{c_1c_2(1-\lambda_2)}{2\tau^2(c_1+2c_2
        \tau^2)}
        \,
        B(\delta),
    \end{align*}
    making use of
    \begin{equation*}
        \E[X^2\ind_{[0,\infty)}(X)] = \E[X^2\ind_{(-\infty,0)}(X)] = \int_0^\delta x^2 p_\delta(x)dx = \frac{1}{2}\E[X^2] = \frac{\tau^2}{2},
    \end{equation*}
    as the density $p_\delta$ is symmetric about zero. Recall that $\frac{2\tau^2(c_1+2c_2\tau^2)}{c_1c_2(1-\lambda_2)}>0$. Further, the function $B$ in~\eqref{eq:B(delta)} is continuous in $\delta$ with $B(0)=0$ and $B(\delta)\to+\infty$ as $\delta\to+\infty$, and so for any $\eta>0$ there exists some $\delta^*>0$ such that $B(\delta^*)
 \geq\frac{2\tau^2(c_1+2c_2\tau^2)}{c_1c_2(1-\lambda_2)}\,\eta$. Thus, the law $P_{\delta^*}$ on $(Y,X)$ satisfies
 \begin{equation*}
     \frac{V(\gamma_{\QML})\wedge V(\gamma_{\GEE})}{\inf_{\gamma\in\Gamma}V(\gamma)} \geq \eta,
 \end{equation*}
 alongside the additional conditions~\eqref{eq:conditions} of  Proposition~\ref{prop:divratio} (that were shown to hold for all $P_\delta$ with $\delta>0$ and thus hold for $P_{\delta^*}$).
\end{proof}

\end{document}